\tikzstyle{short vertical node}=[fill=white, draw=black, shape=rectangle, minimum height=2cm, minimum width=0.5cm]
\tikzstyle{normal vertical node}=[fill=white, draw=black, shape=rectangle, minimum height=4cm]
\tikzstyle{left dashed line for edges}=[draw=black, fill=none, dashed, <-]
\tikzstyle{right dashed line for edges}=[dashed, ->]
\tikzstyle{red edge}=[-, draw={rgb,255: red,238; green,0; blue,0}, dashed]
\tikzstyle{blue edge}=[-, draw={rgb,255: red,17; green,108; blue,255}]
\tikzstyle{directed blue edge}=[<-, draw={rgb,255: red,17; green,108; blue,255}]
\tikzstyle{directed red edge}=[<-, draw={rgb,255: red,238; green,0; blue,0}, dashed]
\tikzset{auto, node distance =2 cm and 3cm, on grid,
	semithick,
	dot/.style={circle,fill=black},
	state/.style ={circle, draw, black, text=black, minimum width =1cm},
	red/.style={ultra thick, color=red},
	blue/.style={ultra thick, color=blue}}
\newtheorem{theorem}{Theorem}[section]
\newtheorem{proposition}[theorem]{Proposition}
\newtheorem{lemma}[theorem]{Lemma}
\newtheorem{definition*}{Definition}[section]
\newcommand{\np}{\ensuremath{\mbox{\rm NP}}}
\newcommand{\pspace}{\ensuremath{\mbox{\rm PSPACE}}}
\mathchardef\mhyphen="2D
\newcommand{\manyone}{\ensuremath{\,\leq_{m}^{p}\,}}
\newcommand{\ncl}{\ensuremath{\mathrm{NCL}}}
\newcommand{\hanano}{\ensuremath{\mathrm{HANANO}}}
\newcommand{\andvertex}{\ensuremath{\mathrm{AND}}}
\newcommand{\orvertex}{\ensuremath{\mathrm{OR}}}
\newcommand{\bbb}{\cdots|BBB}
\newcommand{\booobb}{B\cdot\cdot|\cdot BB}
\newcommand{\obobob}{\cdot B\cdot|B\cdot B}
\newcommand{\oobbbo}{\cdot\cdot B|BB\cdot}
\newcommand{\ooo}{\cdots|BBB}
\newcommand{\boo}{B\cdot\cdot|\cdot BB}
\newcommand{\obo}{\cdot B\cdot|B\cdot B}
\newcommand{\oob}{\cdot\cdot B|BB\cdot}
\newcommand{\brr}{\cdots|BRR}
\newcommand{\booorr}{B\cdot\cdot|\cdot RR}
\newcommand{\orobor}{\cdot R\cdot|B\cdot R}
\newcommand{\oorbro}{\cdot\cdot R|BR\cdot}
\newcommand{\rbr}{\cdots|RBR}
\newcommand{\rooobr}{R\cdot\cdot|\cdot BR}
\newcommand{\oboror}{\cdot B\cdot|R\cdot R}
\newcommand{\oorrbo}{\cdot\cdot R|RB\cdot}
\newcommand{\rrb}{\cdots|RRB}
\newcommand{\rooorb}{R\cdot\cdot|\cdot RB}
\newcommand{\ororob}{\cdot R\cdot|R\cdot B}
\newcommand{\oobrro}{\cdot\cdot B|RR\cdot}
\newcommand{\screenshotwidth}{0.3\linewidth}
\newcommand{\subfigurewidth}{\linewidth}
\newcommand{\myfbox}[2]{
	\centering
	\setlength{\fboxsep}{0pt}
	\setlength{\fboxrule}{1pt}
	\fbox{\includegraphics[width=#1\linewidth]{#2.pdf}}}
\begin{document}\sloppy
  
\title{Defying Gravity: The Complexity of the Hanano Puzzle\thanks{Work supported in part by NSF Grant CCF-2006496.}}

\author{Michael C. Chavrimootoo\\Department of Computer Science\\University of Rochester, Rochester NY 14627, USA}

\date{April 26, 2023}

\maketitle

\begin{abstract}
	Using the notion of visibility representations, our paper establishes a new property
	of instances of the Nondeterministic Constraint Logic (NCL) problem 
	(a $\pspace$-complete problem that is very 
	convenient to prove the $\pspace$-hardness of reversible games with pushing blocks).
	Direct use of this property introduces an explosion in the number of gadgets needed to show
	$\pspace$-hardness, but we show
	how to bring that number from 32 down to only three in general, and down to two in a specific case! 
	We propose it as a step towards a broader and more general framework
	for studying games with irreversible gravity, and 
	use this connection to guide an indirect polynomial-time many-one reduction from the NCL 
	problem to the Hanano Puzzle---which is $\np$-hard---to prove it is in fact 
	$\pspace$-complete.
\end{abstract}

\section{Introduction}

The application of complexity theory to the study of games has allowed us to
understand the hardness of many popular games.
Many games that are limited to a single player are 
NP-complete (with respect to many-one polynomial-time
reductions, which is what we will always refer to when using the terms 
``-hard'' and ``-complete''),
while two-player games are typically 
PSPACE-complete~\cite{dem-hea:b:games-puzzles-comp}.
However, the moment the board layout becomes dynamic or the number of moves 
becomes unbounded, the complexity of a one-player game can jump to being
PSPACE-complete~\cite{dem-hea:b:games-puzzles-comp}.
Surprisingly, the presence of irreversible gravity, which limits the number of moves possible, can yield complex games~\cite{dem-hea:b:games-puzzles-comp}.

The Hanano Puzzle is a one-player game with a dynamic board, unbounded 
moves, and gravity developed by video game creator 
Qrostar~\cite{qro:w:hanano}.
Liu and Yang recently proved that the language version of the Hanano Puzzle is 
NP-hard~\cite{liu-yan:j:hanano}. In their paper, they ask if the problem 
is NP-complete and leave the question open. We pinpoint the problem's 
complexity
by proving Hanano Puzzle's language version to be PSPACE-complete.
We do so by providing an indirect reduction from the 
 Nondeterministic Constraint Logic (NCL) problem (a known 
PSPACE-complete problem~\cite{dem-hea:b:games-puzzles-comp}).
One of the major challenges of the reduction is overcoming the effects of
gravity. We define a method that leverages graph-theoretic techniques to circumvent unwanted 
effects of gravity, thereby making reductions from NCL easier.
To our knowledge, this method (of abstracting away the ``harmful'' effects of gravity) 
is new in this area, which makes our study interesting in this sense.
We are also able to 
significantly reduce the number of gadgets that we need to build by constructing ``base 
gadgets'' from which other gadgets can be built.
This design is
entirely independent of Hanano, and so we believe it might have applications to other similar
games.

\section{Preliminaries}

A simple planar graph is one that is loop-free, has no multi-edges (i.e.,
for each pair of vertices, there is at most one edge between the vertices),
and is planar (i.e., the graph can be drawn on a piece of paper so that 
its edges only intersect at their common endpoints).

	Given a graph $G=(V, E)$, a \emph{visibility representation} $\Gamma$ for $G$ maps 
	every vertex $v \in V$ to a vertical vertex segment $\Gamma(v)$ and 
	every edge $(u, v) \in E$ to a horizontal edge segment $\Gamma(u, v)$ such that 
	each horizontal edge segment $\Gamma(u,v)$ has its respective endpoints lying on the vertical
	vertex segments $\Gamma(u)$ and $\Gamma(v)$, and no other segment intersections
	or overlaps occur~\cite{tam:b:graph-drawing}.\footnote{%
		This definition deviates slightly from the standard one in that the standard 
		definition, vertices are mapped to horizontal segments, and edges are mapped to
		vertical segments. For our purposes, both definitions are equivalent.}

\subsection{The Hanano Puzzle}\label{s:hanano-rules}

This section expands on definitions by 
Liu and Yang~\cite{liu-yan:j:hanano}.
The Hanano Puzzle comprises different levels.
A level of the game is an $n \times m$ grid (with $n, m > 0$)  that
contains only the following components:
immovable gray blocks,
movable gray blocks,
(movable) colored blocks,
colored flowers, and
empty spaces.
Each colored block/flower can be red, blue, or yellow.
Each flower is immovable and is affixed to some block.  If that block is movable, then whenever it moves, the affixed 
flower moves with the block (see Figure~\ref{f:gravity}).
Gray blocks can be of 
arbitrary shape and size, while all other components are $1 \times 1$ objects. 
In our gadgets, we try to the best of our ability to minimize the number of sides of each
movable gray block.
A block can 
slide (see Figure~\ref{f:slide})
left or right, one step at a time. For a slide to occur,
the space that the block will occupy after the slide must either be 
empty or be occupied by part of the block that is sliding.
Two adjacent blocks of width one can also be swapped in one step (see Figure~\ref{f:swap})
the positions of the two blocks can be swapped without moving any other
component of the grid.

\begin{figure}[!ht]
	\centering
\begin{subfigure}[b]{\subfigurewidth}
	\centering
	\includegraphics[width=\screenshotwidth]{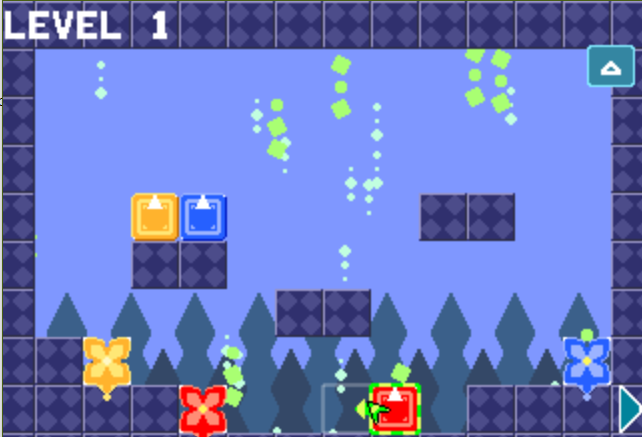}
	\includegraphics[width=\screenshotwidth]{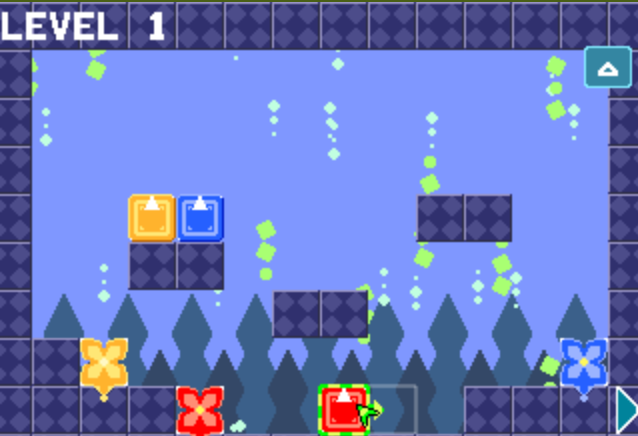}
	\caption{Example of a slide~\cite{qro:w:hanano}.}
	\label{f:slide}
\end{subfigure}
\hfill
\begin{subfigure}[b]{\subfigurewidth}
	\centering
	\includegraphics[width=\screenshotwidth]{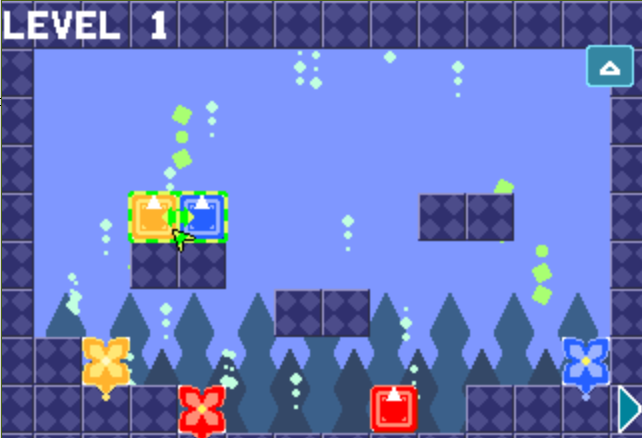}
	\includegraphics[width=\screenshotwidth]{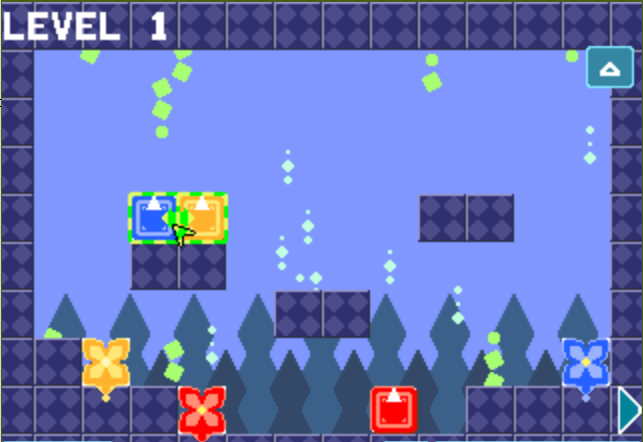}
	\caption{Example of a swap~\cite{qro:w:hanano}.}
	\label{f:swap}
\end{subfigure}
\begin{subfigure}[b]{\subfigurewidth}
	\centering
	\includegraphics[width=\screenshotwidth]{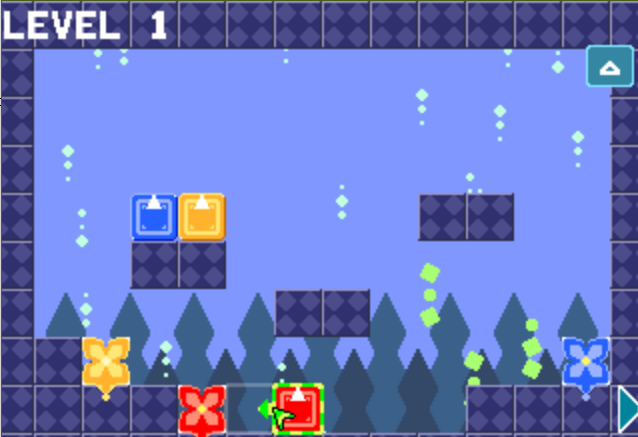}
	\includegraphics[width=\screenshotwidth]{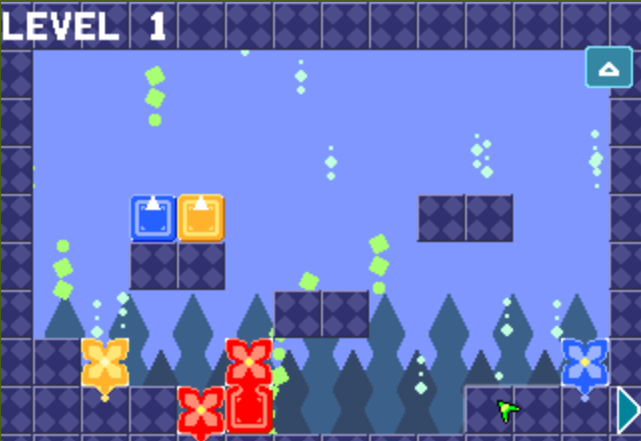}
	\caption{Example of a blooming flower~\cite{qro:w:hanano}.}
	\label{f:bloom}
\end{subfigure}
\hfill
\begin{subfigure}[b]{\subfigurewidth}
	\centering
	\includegraphics[width=\screenshotwidth]{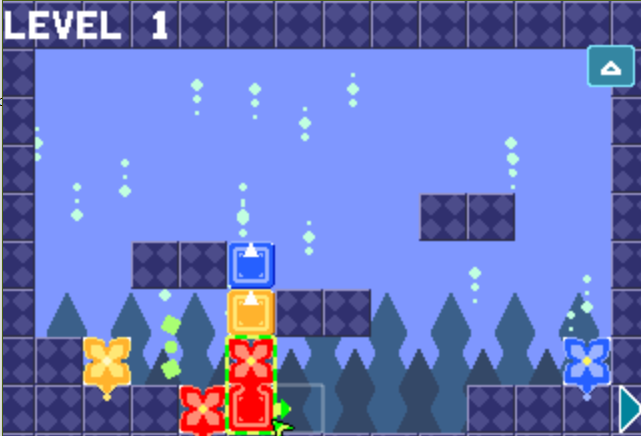}
	\includegraphics[width=\screenshotwidth]{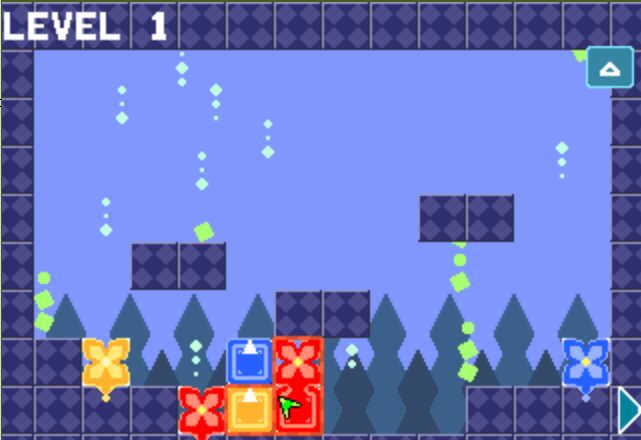}
	\caption{Example of effects of gravity~\cite{qro:w:hanano}.}
	\label{f:gravity}
\end{subfigure}
\caption{Screenshots of the Hanano Puzzle (reproduced with 
permission from Qrostar~\cite{qro:perscomm:hanano}).}
\label{f:screenshots}
\end{figure}

Figure~\ref{f:screenshots} shows screenshots of the game that show
sample game moves. Note that the 
checkered cells are what we call ``immovable gray blocks.'' Movable gray 
blocks are not depicted in these figures.
Because this is a game with gravity, after the player makes a move, every
movable block that is not directly supported will fall (see 
Figure~\ref{f:gravity}). This can be viewed as happening in a single step.
Each colored block contains an arrow, pointing either up, down, left, 
or right.
If a colored block touches (by sharing a side; touching corners have no
effect) a flower of the same color, a flower will bloom 
from the side of the colored block indicated by the arrow (see 
Figure~\ref{f:bloom}), and the new flower will stay affixed/attached to that block.
 We will sometimes say that the block has bloomed when 
this happens.
If the blooming side is in contact with a block, the 
blooming flower attempts to ``force'' its way out by pushing against the surface
in contact with the blooming side. This may result in that block in contact with the blooming
side to be shifted, or in the blooming block to be shifted.
If no shift is possible, then the flower does not bloom. 
A block can only bloom once 
and that action cannot be undone. Additionally, if the new flower is in 
contact with a different block of the same color, chain bloomings can occur 
within the same step. To solve (complete) a level, one must make every colored 
block bloom.
Formally, we determine the complexity of
$\hanano = \{H \mid H \text{ is a solvable level of the Hanano Puzzle}\}.$

\subsection{Nondeterministic Constraint Logic (NCL)} \label{s:ncl-rules}
The notions introduced in this section are from Hearn and Demaine 
\cite{dem-hea:b:games-puzzles-comp}.
An NCL graph is a directed 
graph consisting of edges of weights one or two 
(respectively called red and blue edges)
that connect vertices while subject to
the constraint that 
the sum of weights of edges into each vertex is at least two
(aka the minimum inflow requirement/constraint).
The only operation allowed on an NCL graph is flipping the direction of 
an edge such that the new graph is still an NCL graph. Given an NCL 
graph $G$ and an edge $e$ in the graph, deciding if there is a sequence of 
edge flips that eventually flip $e$ is PSPACE-complete.
It turns out that the problem remains PSPACE-complete even 
if the NCL graph is a planar AND/OR NCL graph, i.e., 
it is simple planar, each vertex is connected to exactly three edges,
and each vertex is either an AND vertex 
(i.e., one incident edge is blue and the other two are red)
or an OR vertex (i.e., all incident edges are blue).
In this paper, we will tacitly assume that our NCL graphs are planar AND/OR NCL
graphs.
For readability and accessibility purposes, in addition to being colored, the
 blue edges in this paper will be solid and the red edges will be dashed.

 Typically, to show that NCL reduces to a problem $A$, it suffices to 
construct an AND gadget and an OR gadget. However, because of the effects of 
gravity, we will see in Section~\ref{s:completeness} that the gadgets need to
``interact'' properly. 

\begin{figure}[h]
	\centering
	\begin{minipage}{0.49\linewidth}
		\ctikzfig{images/tikz_files/example_ncl}
		\caption{An NCL graph.}\label{f:ncl-example}
	\end{minipage}
	\hfill
	\begin{minipage}{0.49\linewidth}
		\ctikzfig{images/tikz_files/example_visibility_representation}
		\vspace*{-3mm}
		\caption{A visibility representation of Figure~\ref{f:ncl-example}.}\label{f:visibility-rep-example}
	\end{minipage}
\end{figure}

\section{PSPACE-Completeness} \label{s:completeness}

The upper bound below is clear and easy to see, since the configure of the game at any point in time can be stored in polynomial space.

\begin{theorem} \label{t:upper-bound}
	$\hanano \in \pspace$.
\end{theorem}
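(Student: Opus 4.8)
The plan is to show membership in $\pspace$ by exhibiting a nondeterministic polynomial-space algorithm that decides $\hanano$, and then invoking Savitch's theorem to conclude $\hanano \in \npspace = \pspace$. The key observation is that a configuration of a Hanano level—the contents of each of the $n \times m$ grid cells, together with the record of which colored blocks have already bloomed—can be encoded in space polynomial in the size of the input level $H$. Indeed, each cell holds one of a constant number of symbol types (empty, immovable gray, part of a particular movable gray block, a colored block with a given color and arrow direction, or a flower of a given color), and gray blocks can be tracked by an identifier of polynomially many bits; the ``has bloomed'' status is a single bit per colored block, of which there are at most $nm$.

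First I would fix this polynomial-size configuration encoding and argue that the initial configuration is computable from $H$ in polynomial time, and that the ``solved'' predicate (every colored block has bloomed) is checkable in polynomial time on a configuration. Next I would observe that the legal-move relation is polynomial-time decidable: given a configuration $C$ and a proposed player action (a slide of some block left or right, or a swap of two adjacent width-one blocks), one checks the geometric legality of the action, then deterministically applies the cascade of induced effects—gravity settling of unsupported movable blocks and any chain bloomings triggered by same-color contact—to obtain the successor configuration $C'$. Each such cascade only rearranges finitely many blocks within the fixed grid and flips finitely many ``bloomed'' bits, so it terminates after polynomially many elementary updates and is computable in polynomial time and space.

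With these pieces in place, the $\npspace$ algorithm simply starts from the initial configuration and repeatedly guesses a legal player action, applying the deterministic successor map, accepting if and when it reaches a solved configuration. It only ever stores one configuration (plus scratch space for computing successors), so it runs in polynomial space; and it has an accepting path exactly when $H$ is a solvable level, i.e., exactly when $H \in \hanano$. Applying Savitch's theorem gives $\hanano \in \pspace$.

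The only point requiring any care—though it is routine rather than hard—is being precise about the deterministic resolution of a single ``step'': one must confirm that after a player action, the combined process of gravity and chain bloomings is well-defined and confluent (the rules of Section~\ref{s:hanano-rules} specify it as a single deterministic step), so that the successor configuration $C'$ is unique and the move relation is genuinely a function of $(C, \text{action})$. Since the grid and the number of blocks are fixed and bloomings are monotone (a block blooms at most once, irreversibly), this process clearly halts and is polynomial-time computable, so no real obstacle arises; the theorem follows.
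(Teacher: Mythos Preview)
Your proposal is correct and follows essentially the same approach as the paper: the paper's entire argument is the one-line observation that ``the configur[ation] of the game at any point in time can be stored in polynomial space,'' which is exactly the key point underlying your $\npspace$ simulation plus Savitch. You simply spell out in more detail what the paper leaves implicit.
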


\subsection{Defying Gravity with Structure}\label{sub:structure}

Part of the difficulty in devising a correct reduction 
is the fact that in NCL, every action is fully reversible
while in HANANO, due to gravity and blooms, some moves are irreversible, so
we cannot give direct gadgets.
Liu and Yang~\cite{liu-yan:j:hanano} did not encounter this issue as 
their reduction from $\mbox{CIRCUIT-SAT}$ 
leveraged the fact that in a boolean circuit, bits only need to move in one direction once.
The technique behind their construction is very similar to that used by 
Friedman~\cite{fri:a:cubic} to show the NP-hardness of a simple game 
(Cubic) with gravity.
We thus need to be careful in our construction to make sure we do
not prematurely make irreversible moves. 
Additionally, we shall build into our
gadgets the constraints of the NCL game to help simulate NCL using
HANANO\@.

Given an NCL graph,
each node
 in the graph 
 will be simulated using gadgets. 
To help identify the colored blocks and flowers of the gadgets in proofs, 
we label those items with special text.
E.g., the label ``B2'', indicates the second blue block in the gadget,
whereas the label ``BF1'' indicates the first blue flower.
Blue blocks will represent both red and blue edges. 
It's important to note that \emph{all our blocks bloom upwards}.
Next to each flower will be a boldfaced white line to indicate where the flower is attached.
Additionally, our gadgets will contain some grid lines to help the reader better gauge the distances.
The presence of a block in a gadget will indicate that the edge
represented by the block is directed into the node represented by the gadget.
Blocks will only be allowed to move between gadgets by following the 
constraints imposed by NCL\@. This means that those constraints must be 
encoded within the gadgets, using the rules of the Hanano Puzzle.
This is the first challenge.
The second challenge is to overcome the nonreversibility induced by gravity.
To help ensure that most block moves are 
reversible, 
the effects of gravity must be circumvented. Luckily, this is 
possible due to the planarity of NCL graphs. We shall start by addressing 
the second challenge. The first challenge will be resolved by designing the 
gadgets.
One such way is by having blooms ``force'' a certain setting of the 
game 
when the inflow constraints are violated.

\begin{theorem}[\cite{tam-tol:j:visibility-representations,tam:b:graph-drawing}]
A graph admits a visibility representation if and only if it is planar. Furthermore,
a visibility representation for a planar graph can be constructed in linear time.\footnote{
	In an earlier version of this paper, we independently proved a weaker version of this theorem. 
	Our result established the ``if'' direction (the ``only if'' direction is trivial), 
	but our polynomial-time algorithm did not run in linear time.}
\end{theorem}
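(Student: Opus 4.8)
The plan is to prove the two implications separately, with essentially all of the work going into the ``if'' direction and the linear-time bound; the mismatch between our definition (vertices vertical, edges horizontal) and the standard one is immaterial by the footnote, so I will freely use whichever orientation reads more cleanly. The ``only if'' direction I would dispatch directly: given a visibility representation $\Gamma$ of $G$, choose an interior point $p_v$ on each vertex segment $\Gamma(v)$ and draw each edge $(u,v)$ as the curve that follows $\Gamma(u,v)$ and then runs, inside a thin corridor flanking $\Gamma(u)$, up or down to $p_u$, and symmetrically at $v$; ordering the corridors of the edges incident to one vertex segment by their attachment heights keeps them disjoint. Since the segments of $\Gamma$ meet only at shared endpoints, these curves meet only at shared vertices, so $G$ is planar.

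For the ``if'' direction I would first reduce to a convenient special case. If $G'$ is a planar supergraph of $G$ on the same vertex set, a visibility representation of $G'$ yields one of $G$ by deleting the edge segments of $E(G')\setminus E(G)$ (the survivors still meet only at shared endpoints), so it suffices to treat a connected graph that has been augmented with edges to be maximal planar, hence $2$-connected; disconnected graphs are handled by laying the components out side by side. Next I would fix a combinatorial planar embedding, pick $s,t$ on the outer face with $(s,t)$ an edge (adding it inside the outer face if necessary), and compute an \emph{$st$-numbering}, equivalently a \emph{bipolar orientation}: an acyclic orientation whose unique source is $s$ and unique sink is $t$. I would then invoke the classical structure of bipolar orientations --- at each internal vertex $v$ the in-edges are consecutive in the rotation and so are the out-edges, leaving exactly two incident faces $\mathrm{lo}(v)$ and $\mathrm{hi}(v)$ that separate an in-edge from an out-edge --- together with the classical fact that the planar dual of a bipolar orientation is again bipolar, to obtain an $st$-numbering $Y(\cdot)$ of the faces.

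The representation itself I would build as follows: vertex $v$ gets the vertical segment on the line $x=X(v)$, where $X(v)$ is the $st$-number of $v$ (so all vertex segments lie on distinct vertical lines), with vertical extent from $Y(\mathrm{lo}(v))$ up to $Y(\mathrm{hi}(v))$; an edge $e=(u,v)$ with $X(u)<X(v)$ gets the horizontal segment from $x=X(u)$ to $x=X(v)$ at a height $y_e$ chosen strictly between the $Y$-values of the faces immediately below and above $e$. Then I would verify two things: (i) \emph{incidence} --- the endpoint $(X(u),y_e)$ of $\Gamma(e)$ lies on $\Gamma(u)$, because in the dual there are directed paths from $\mathrm{lo}(u)$ to the face below $e$ and from the face above $e$ to $\mathrm{hi}(u)$, whence $Y(\mathrm{lo}(u))\le y_e\le Y(\mathrm{hi}(u))$, and similarly at $v$; and (ii) \emph{disjointness} --- distinct vertex segments lie on distinct lines; edge segments sharing a height are forced disjoint by the dual order once ties among edges bounding the same pair of faces are broken; and no edge segment $\Gamma(e)$ crosses the interior of a non-incident vertex segment $\Gamma(w)$, because bipolarity places $w$ strictly on one side of the directed ``left'' and ``right'' paths bounding the face carrying $e$, forcing $X(w)\notin(X(u),X(v))$. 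The hard part will be step (ii): making precise, and actually proving, that the dual $st$-numbering encodes a globally consistent left-to-right order on faces, vertices, and edges, so that all these incidences and non-crossings hold simultaneously --- the remainder is bookkeeping.

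Finally, for the linear-time claim I would compose the standard linear-time subroutines: planarity testing with an embedding; augmentation to a maximal planar graph by edge insertions; computation of an $st$-numbering / bipolar orientation; construction of the planar dual and a bipolar orientation of it; and a single traversal to emit the coordinates $X(\cdot),Y(\cdot)$ and the segments, followed by deletion of the edge segments that were added during augmentation (and, if desired, trimming of vertex segments to the portion actually used). Each step is linear, so the whole construction is, which gives the stated bound.
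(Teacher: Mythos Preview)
The paper does not prove this theorem at all: it is stated with a citation to Tamassia--Tollis and to Tamassia's book, and the footnote only remarks that an earlier draft contained a weaker (polynomial-time, not linear-time) argument that has since been dropped. So there is no ``paper's own proof'' to compare against.

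That said, your proposal is essentially the classical Tamassia--Tollis construction from the cited references: reduce to a $2$-connected planar graph, take a bipolar ($st$-)orientation, use the primal $st$-numbering for one coordinate and the dual $st$-numbering on faces for the other, and read off the segment endpoints from $\mathrm{lo}(v)$ and $\mathrm{hi}(v)$. The correctness outline (incidence via directed paths in the dual, non-crossing via the left/right structure of a bipolar orientation) and the linear-time accounting (linear-time embedding, $st$-numbering, dual construction) are all standard and match what one finds in those sources. One small caveat: augmenting to a \emph{maximal} planar graph in linear time is a bit delicate, and in fact unnecessary---$2$-connectivity already suffices for an $st$-orientation, and biconnecting a planar graph while preserving planarity in linear time is the cleaner subroutine to invoke. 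With that adjustment your sketch is sound and coincides with the cited proof.
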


Since our NCL graphs are planar, we can compute the 
visibility representation of an NCL graph in linear time. This has the advantage that if we're 
trying to reduce to a game of sliding blocks, and we wish to represent the direction of the
edges by using blocks that are sliding from one gadget (where each vertex is represented using a 
gadget)
to another, then by having all the edges be
horizontal, we remove the danger of having gravity make an ``edge flip'' irreversible. 
Implicit in this, is that our gadgets will need to be ``size-adaptable,'' i.e., as we will see
based on the length of the segment to which a vertex is mapped, a gadget's height may need to 
change. Our constructions will have the property that they can be internally padded so as to make
gadgets artificially long without affecting their correctness.
Thus those
``edge flips'' in the game of interest (here, the Hanano Puzzle) are fully reversible, to the extent that
is required to be compatible with NCL's reversibility. 

\subsection{Gadgets and Schemas}

Each vertex in the NCL graph will be represented using a gadget and
those gadgets will be connected using tunnels that will represent the edges.
For each of these tunnels, there will be a blue block and that block will be placed in the gadget
representing the vertex to which the edge is incident. Thus, flipping edges
will be represented by moving blocks from one gadget to another. For gadgets
to interact properly, we must ensure that a block can only travel through its 
designated tunnel and that the minimum inflow requirement is always met,
i.e., for each gadget, there is always either one blue block representing a blue edge or two blue blocks each representing a red edge
in the gadget, at any point in time.
Since each vertex in the NCL graph is connected to exactly three edges, 
each gadget will have three entry points that can each lie either on the 
left or the right of the gadget. 
Consider the following notation to represent a gadget: 
$x_1x_2x_3|y_1y_2y_3$, where for each $i \in \{1, 2, 3\}$, $\{x_i, y_i\} \in 
\{\{R, \cdot\}, \{B, \cdot\}\}$, and the list $[x_1, x_2, x_3, y_1, y_2, y_3]$ contains either
exactly
 three $B$s, or exactly one $B$ and two $R$s. (The last condition simply captures the idea that each vertex is either an AND vertex or an OR vertex.)
For example, $\rooobr$ means that the top 
entry point on the left side of the gadget is for a red edge's blue block, 
the middle on the right side is for a blue edge's blue block, and the bottom on the right
side is for a red edge's blue block. 
The remaining entry points are considered blocked off, i.e., not entry 
points. 
It's easy to see that due to this structure enforced, the number of gadgets 
needed to show the reduction goes up from two to~\emph{32} (8 OR gadgets~+ 24 
AND gadgets).\footnote{%
	One could try to argue that given one OR gadget and one AND gadget, it suffices to ``just place the tunnels on the correct side'' to obtain the remaining~30 gadgets, but that approach does not take into account the structure of the gadget. This suggested approach certainly works for our OR gadgets (see Figure~\ref{f:or-gadget}), but not for our AND gadgets (see Figure~\ref{f:and-gadget}). However, we do show how to derive all the OR/AND gadgets in a way that is independent of the structure of the underlying gadgets.%
}
However, we will show how, by giving only three gadgets, we can derive
all the remaining gadgets (i.e., show their existence). The approach will in fact not rely
on properties of the Hanano Puzzle, and thus will be reusable for other purposes.
Moreover, if one restricts their attention to HANANO, then we will argue that one of the gadgets need not be constructed.
Let us first look at how to construct the OR gadgets. 

\newcommand{\biggadgetwidth}{0.82}

\begin{figure}[h!]
	\centering
	\begin{subfigure}[b]{0.49\linewidth}
		\myfbox{\biggadgetwidth}{or}
		\caption{$\booobb$ gadget.}\label{f:or-gadget}
	\end{subfigure}
	\hfill
	\begin{subfigure}[b]{0.49\linewidth}
		\myfbox{\biggadgetwidth}{and}
		\caption{$\rooorb$ gadget.}\label{f:and-gadget}
	\end{subfigure}
	\hfill
	\begin{subfigure}[b]{0.49\linewidth}
		\myfbox{\biggadgetwidth}{red_bend}
		\caption{Red bend gadget.}\label{f:bend}
	\end{subfigure}
	\hfill
	\caption{Our three gadgets: an OR gadget ($\booobb$), an AND gadget ($\rooorb$), and a red bend gadget.}\label{f:all-gadgets}
\end{figure}

\begin{lemma}\label{l:or-correctness}
	The gadget in Figure~\ref{f:or-gadget} satisfies the same constraints as an $\ncl$ $\orvertex$ vertex.
\end{lemma}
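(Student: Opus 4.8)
The plan is to exhibit an explicit correspondence between the \emph{interface configurations} of the gadget and the configurations of an $\ncl$ $\orvertex$ vertex, and then verify that the two transition systems coincide. Recall that an $\orvertex$ vertex has three incident blue (weight-two) edges, so its minimum inflow constraint is satisfied exactly when at least one of the three edges is directed inward; hence the legal flips are ``redirect any out-edge inward'' (always legal) and ``redirect an in-edge outward, provided at least one other edge stays inward.'' For the gadget, consider it together with the mouths of its three tunnels, and define the interface configuration to be the triple recording, for each tunnel, whether the blue block assigned to that tunnel currently lies inside the gadget; the intended reading is ``block $i$ inside $\iff$ edge $i$ directed into the vertex.'' Writing $S\subseteq\{1,2,3\}$ for the set of tunnels whose block is inside, the lemma reduces to showing (i) $S$ is reachable in the gadget if and only if $S\neq\emptyset$, and (ii) from a reachable configuration a single ``interface move'' takes $S$ to $S'$ if and only if $S,S'$ are both nonempty and differ in exactly one tunnel --- which is precisely the $\orvertex$ move relation.

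For the forward direction I would argue, by inspecting Figure~\ref{f:or-gadget}, that the gadget can realize every legal move. Adding a block: since all tunnels are horizontal (this is exactly what the visibility-representation setup in Section~\ref{sub:structure} buys us) and the interior always has a free landing slot for the incoming block, the inward slide and the subsequent gravity step are available no matter which other blocks are present, and they are reversible. Removing a block when $|S|\geq 2$: I would identify the resting positions of the interior blocks and check that the presence of a second block supplies the adjacency (for a swap) or the cleared path (for a slide) needed to bring any chosen block to its own tunnel mouth and push it out. For soundness I would then enumerate the moves physically available inside the gadget and argue each is either an interior rearrangement that leaves $S$ unchanged and is undoable, or it moves exactly one block into or out of exactly one tunnel, matching an $\ncl$ flip; the key point is that when $|S|=1$ the lone interior block admits no legal move toward any tunnel (no unobstructed slide, and a swap is impossible with no neighbor), so $S$ can never reach $\emptyset$. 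One also checks that no block can travel through a tunnel other than its own and that no gravity step ever ejects a block, and that no blooming --- the only irreversible move type --- occurs inside the gadget during legal play. Together with the forward direction this gives that the reachable interface configurations are exactly the nonempty subsets of $\{1,2,3\}$ with exactly the $\orvertex$ transitions, proving the lemma.

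The main obstacle is the soundness step, and within it the ``lock'' at $|S|=1$: one must verify through a careful case analysis over the cells bordering the lone block's resting position, and over every applicable slide, swap, and gravity move, that the lone block genuinely cannot escape to any tunnel. A secondary point to confirm is that all of the above survives internal padding of the gadget to an arbitrary height, as promised in Section~\ref{sub:structure}; this should be immediate since padding only lengthens straight vertical stretches, but it must be stated. The remaining obligations are routine geometric bookkeeping over the figure.
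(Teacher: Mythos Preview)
Your plan misidentifies the mechanism by which the gadget enforces the $\orvertex$ constraint. You argue that when $|S|=1$ the lone block is physically trapped (``no unobstructed slide, and a swap is impossible with no neighbor''), so $S$ can never reach $\emptyset$. That is not how Figure~\ref{f:or-gadget} works. In the paper's gadget, the three edge blocks B1, B2, B3 sit at distinct heights supporting a stack of movable gray blocks G1, G2; nothing prevents all three from sliding out through their tunnels. What enforces the constraint is a fourth, \emph{internal} colored block B4 with its own flower BF4: the level is solvable only if B4 blooms, and B4 can reach BF4 only after G2 (hence G1) has been pushed up one unit, which in turn requires one of B1, B2, B3 to bloom while still inside. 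If all three edge blocks ever exit simultaneously, G1 and G2 drop by gravity and can never be restored, so B4 becomes permanently unbloomable and the level is lost. The constraint is thus enforced \emph{indirectly through solvability}, not by a geometric lock at $|S|=1$.

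This also contradicts your explicit claim that ``no blooming\ldots occurs inside the gadget during legal play.'' Blooming is the heart of the mechanism: B4 must bloom, and at least one of B1, B2, B3 must bloom to lift the gray stack. Consequently your proposed bisimulation between interface configurations and $\orvertex$ states is the wrong correctness statement for this gadget; the right one is a solvability equivalence (``the gadget is solvable iff at least one of B1, B2, B3 blooms inside, and removing all three makes it unsolvable''), together with the observation that as long as one block remains to support G1, the others can freely enter and leave. Rework the argument around the B4/G1/G2 support chain rather than around swap availability.
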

\begin{proof}
	First notice that each movable gray block has very limited movement.
	G2 can only move up by one ``unit,'' and G1 can either move up or move down
	by one unit. Thus for any blue block B$x$, the only flower that it can reach in
	that gadget is BF$x$.
	Now, the only way for B4 to bloom is if B4 is in contact with BF4, and it must be
	on BF4's right side (the only other exposed side of BF4 is the bottom side, but if B4 is 
	directly under BF4 it will not have enough room to bloom). 
	Thus B4 can bloom iff G2 moves up by one unit. This can happen exactly if G1
	moves up by one unit, which can happen iff one of B1, B2, or B3 blooms. 
	Finally, notice that if B1, B2, and B3 all leave the gadget, then G1 and G2 both drop
	by one unit with no possibility of returning to their original configuration, thus
	making it impossible to bloom B4.
	
	We conclude by noting that we could have merged G1 and G2 into a single block, 
	but opted not to as we sought to minimize the number of sides on each movable gray block.
\end{proof}

We will show how to construct certain gadgets from other gadgets, by essentially chaining certain gadgets together.
For our convenience we define a ``constrained blue edge terminator'' gadget, that will allow us to force an edge from pointing out of a gadget, without connecting the edge to other nodes. This allows us to simplify the design of our gadgets.
We state the following proposition in a general form, i.e., its proof will not depend on the
Hanano Puzzle's properties.

\begin{proposition}\label{p:constrained_blue}
	The constrained blue edge terminator gadget can be constructed using any 
	gadget that satisfies the same constraints as an $\ncl$ $\orvertex$ vertex.
\end{proposition}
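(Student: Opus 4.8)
The plan is to realize the terminator as a single copy of an arbitrary gadget that satisfies the $\orvertex$ constraint, two of whose three edges have been ``blocked off'' in the sense used above for entry points. Recall that the only demand an $\ncl$ $\orvertex$ vertex places on a configuration is the minimum inflow requirement: at least one of its three weight-two edges is oriented into the vertex. So I would take any gadget $G$ that satisfies the same constraints, single out one of its three edges $e$ to be the terminator's unique external edge, and leave the other two edges of $G$ unconnected --- declaring their ports blocked off, so that no block can ever travel along them and they are permanently oriented out of $G$. The inflow requirement of $G$ then collapses to ``$e$ is oriented into $G$,'' so $e$ can never be flipped to point out of $G$. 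Identifying $e$ with an edge $f$ of any other gadget therefore pins $f$ to point into the terminator (equivalently, out of that other gadget), which is exactly the job of a constrained blue edge terminator.

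Concretely I would carry out three steps. \emph{First}, I would make precise --- in the abstract reduction framework, without appeal to the Hanano Puzzle --- what a blocked-off port is: it carries no block, hence behaves exactly like an edge frozen in the outward orientation, so the reachable configurations of $G$ are the restriction of its configuration space to those in which the two chosen edges point out. \emph{Second}, I would invoke the hypothesis that $G$ enforces the $\orvertex$ constraint to conclude that in every such reachable configuration $e$ points into $G$, and that the configuration with $e$ oriented inward is always reachable, so that the terminator is consistent and does not render a combined instance vacuously unsolvable. \emph{Third}, I would check that this behaviour survives composition: when the terminator is glued to a host gadget along $e$, the host simply sees one of its edges with a frozen orientation, so the overall instance faithfully simulates the $\ncl$ graph in which that edge has been pinned.

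Since the argument refers only to the abstract $\orvertex$ constraint and to the meaning of blocked-off ports, and nothing about blooms, gravity, or the grid, it transfers verbatim to any game for which an $\orvertex$ gadget has been built (this is the sense in which the proposition is stated ``in a general form''); in particular, applying it to the gadget of Figure~\ref{f:or-gadget}, whose correctness is Lemma~\ref{l:or-correctness}, immediately yields a Hanano terminator. I expect the first step to be the only genuine obstacle: one must make sure that blocking off two ports of a \emph{generic} $\orvertex$-satisfying gadget really does amount to freezing those two edges outward and does not silently unlock moves that were previously forbidden --- i.e.\ that restricting two of the three edges to a fixed orientation leaves a gadget that still honestly enforces the resulting one-edge constraint. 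For honest realizations of the $\orvertex$ constraint this is immediate, but it is the place where the proof must be stated with care rather than asserted.
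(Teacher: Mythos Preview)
Your proposal is correct and follows essentially the same approach as the paper: take an arbitrary $\orvertex$ gadget in a configuration with one edge pointing in and two pointing out, block off the two outward ports, and observe that the minimum-inflow constraint then forces the remaining edge to stay directed into the gadget. Your write-up is more cautious about the semantics of ``blocking off'' and about composition than the paper's brief argument, but the construction and the reasoning are the same.
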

\begin{proof}
	We want the constrained blue edge terminator to be a gadget that, when attached to a tunnel that represents a blue edge, will force the block that represents the edge's orientation to be inside itself (i.e., inside the constrained blue edge terminator) so as to not violate the inflow constraints.

	Fix a gadget that satisfies the same constraints as an NCL OR vertex. There must exist a configuration of the gadget that corresponds to the NCL OR 
	with one (blue) edge pointing into the vertex and the remaining edges (i.e., two blue edges) 
	pointing out of the vertex. Now, block off the tunnels that correspond to the two
	edges pointing of the vertex. This configuration of the gadget 
	correctly constraints the represented blue edge's orientation.
\end{proof}

A natural question to ask is whether $\booobb$ is special, or whether this result can be achieved
using any of the other OR gadgets, and we answer in the positive that indeed, any 
of the eight OR gadgets suffices.
We first note that it suffices to consider the gadgets for $\boo$, $\obo$, $\oob$, and $\ooo$,
as the remaining ones can be obtained via vertical symmetry.
In an abuse of notation, we will sometimes use the shorthand for a gadget to refer
to the gadget that is obtained from it via vertical symmetry in our schemas.
Edges that are connected to the constrained blue edge terminator have a direction assigned and
point to a $\oslash$ to indicate the termination. The remaining edges have no direction, indicating
that they can be assigned in any way that satisfies the minimum inflow constraints. 

\begin{lemma}\label{l:or-equivalence}
	For each gadget $G$ in the following list, the remaining gadgets in that same list can be constructed
	from $G$: 
	\begin{enumerate*}[itemjoin={{, }}, itemjoin*={{, and }}, after={{.}}]
		\item\label{or:one} $\booobb$
		\item\label{or:two} $\obobob$
		\item\label{or:three} $\oobbbo$
		\item\label{or:four} $\bbb$
	\end{enumerate*}
\end{lemma}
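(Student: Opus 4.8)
The goal is to show that the four OR-gadget types $\booobb$, $\obobob$, $\oobbbo$, and $\bbb$ are all mutually inter-constructible, meaning that from any one of them (together with the tunnel/edge machinery and the constrained blue edge terminator, which by Proposition~\ref{p:constrained_blue} is itself buildable from any OR gadget) one can realize a gadget satisfying the NCL OR constraints with the three entry points in any of the four listed left/right configurations. Since inter-constructibility is an equivalence relation, it suffices to exhibit a single ``hub'' gadget—I would use $\bbb$, the all-on-one-side version—and show (a) each of the other three can be built from $\bbb$, and (b) $\bbb$ can be built from each of the other three. Combined with vertical symmetry (which the excerpt already tells us lets us ignore the other four OR types), this gives the full claim.

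The key construction idea is \emph{edge redirection by chaining}: to move an entry point from one side of a gadget to the other, attach an auxiliary OR gadget along that tunnel so that two of the auxiliary gadget's three edges are pinned by constrained blue edge terminators, leaving exactly one free edge that now enters on the opposite side. Concretely, an OR vertex with inflow from a single blue edge forces its other two blue edges to point out; terminating one of those two with a $\oslash$ leaves a pass-through ``blue wire with a bend/side-swap'' whose two live endpoints are the original tunnel and a new tunnel on whichever side we route it to. Because NCL OR vertices are fully reversible and the constraint is symmetric (inflow $\geq 2$, all weight-2 edges), this splice faithfully transmits the orientation: the auxiliary gadget is satisfied precisely when exactly one of its two live edges points inward, which is exactly the ``a block is present on one side iff it is absent on the other'' behavior of a wire. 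Chaining one or two such redirectors onto each of the three tunnels of the hub gadget lets us place all three entry points wherever we like, converting $\bbb$ into any of $\booobb$, $\obobob$, $\oobbbo$; and running the argument in reverse—splicing redirectors onto, say, $\obobob$ to collapse all three entry points onto one side—rebuilds $\bbb$. I would present this as one redirection lemma and then invoke it three times per direction.

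The main obstacle I expect is \emph{not} the combinatorial routing but verifying that the spliced-in auxiliary OR gadgets do not introduce spurious reachable configurations—i.e., that the composite gadget satisfies \emph{exactly} the OR constraint and nothing weaker or stronger. The subtle point is that an auxiliary OR gadget with two terminated edges has a third ``live'' edge that must carry inflow whenever the terminators demand it, and one must check the global minimum-inflow bookkeeping across the whole chain so that the hub gadget still sees a legal inflow pattern in every state, and conversely every legal state of the hub lifts to a legal global state. I would handle this by a short invariant argument: in any chain of OR redirectors terminated on one side, the orientation of the single free edge is in bijection (respecting flips) with the orientation the original tunnel would have had, because each redirector is a reversible two-state wire. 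A secondary, purely bookkeeping concern is geometric realizability—each splice must fit in the plane and the host gadget must be internally padded to accommodate the extra height—but the excerpt already guarantees our gadgets are size-adaptable and internally paddable, so I would dispatch that with one sentence. Since Proposition~\ref{p:constrained_blue} already gives the terminator from any OR gadget, the whole argument stays independent of Hanano-specific properties, as the lemma's placement in the paper suggests it should be.
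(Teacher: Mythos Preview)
Your construction idea---splicing copies of the available OR gadget (and its vertical mirror) together with constrained blue edge terminators to relocate entry points---is exactly what the paper does; the paper just organizes it as a four-cycle (item~1 $\to$ 2 $\to$ 3 $\to$ 4 $\to$ 1), giving one two-gadget schema per step, rather than your hub-and-spoke through $\bbb$.

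The gap is in your analysis of the ``redirector.'' You first say two of the auxiliary OR gadget's three edges are pinned by terminators---but then the lone remaining edge is the auxiliary gadget's only possible source of inflow and is therefore \emph{forced} to point in, so it cannot serve as a free pass-through at all. Your next sentence switches to terminating only one edge, but the justification that an OR vertex with a single incoming blue edge ``forces its other two blue edges to point out'' is false: one blue edge already supplies inflow $2$, so the remaining two edges are unconstrained. With one terminator the auxiliary device thus has three legal states (either live edge in, or both in), not two, so it is not a wire and your ``bijection respecting flips'' invariant does not hold. The composite of hub plus one-terminator auxiliary \emph{does} end up behaving as an OR vertex on its three external edges, but for the reason the paper actually gives: if all three external edges point out, the one shared internal edge can feed only one of the two subgadgets while the terminated edge feeds neither, so some subgadget is starved. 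You would need to replace your wire/invariant argument with this direct inflow case check, which is precisely what the paper does at each step of its cycle.
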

\begin{proof}%
	We prove this lemma by showing how to construct~\ref{or:two} from~\ref{or:one}, how to
	 construct~\ref{or:three} from~\ref{or:two},
	how to construct~\ref{or:four} from~\ref{or:three}, and finally how to 
	construct~\ref{or:one} from~\ref{or:four}.
	In each case, we will have a gadget that satisfies the same constraints as an NCL OR vertex, so we
	tacitly appeal to Proposition~\ref{p:constrained_blue} to, for ``free,'' have a constrained blue
	edge terminator. We construct in Figure~\ref{f:or-schemas} schematic diagrams to aid in our proof.

	From~\ref{or:one} to~\ref{or:two}. Figure~\ref{f:schema-obo} depicts the construction.
	The edges of the $\obo$ gadget are~2, 4, and~5.
	If those edges all
	point out (i.e.,~2 points left and the other two point right), then the minimum inflow constraint
	is violated as~3 cannot be flipped and~1 can only point into one the two gadgets. Thus one 
	of~2, 4, or~5 must always be pointed inwards, and this gadgets satisfies the same constraints and as NCL OR vertex.

	The rest of the proof is analogous to the first case, and we provide it for completeness.

	From~\ref{or:two} to~\ref{or:three}. Figure~\ref{f:schema-oob} depicts the construction.
	The edges of the $\obo$ gadget are~2, 4, and~5.
	If those edges all
	point out (i.e.,~2 points left and the other two point right), then the minimum inflow constraint
	is violated as~3 cannot be flipped and~1 can only point into one the two gadgets. Thus one 
	of~2, 4, or~5 must always be pointed inwards, and this gadgets satisfies the same constraints and as NCL OR vertex.

	From~\ref{or:three} to~\ref{or:four}. Figure~\ref{f:schema-ooo} depicts the construction.
	The edges of the $\obo$ gadget are~2, 4, and~5.
	If those edges all
	point out (i.e., they all point right), then the minimum inflow constraint
	is violated as~3 cannot be flipped and~1 can only point into one the two gadgets. Thus one 
	of~2, 4, or~5 must always be pointed inwards, and this gadgets satisfies the same constraints and as NCL OR vertex.

	From~\ref{or:four} to~\ref{or:one}. Figure~\ref{f:schema-boo} depicts the construction.
	The edges of the $\obo$ gadget are~2, 4, and~5.
	If those edges all
	point out (i.e.,~2 points left and the other two point right), then the minimum inflow constraint
	is violated as~3 cannot be flipped and~1 can only point into one the two gadgets. Thus one 
	of~2, 4, or~5 must always be pointed inwards, and this gadgets satisfies the same constraints and as NCL OR vertex.
\end{proof}

\begin{figure}[!ht]
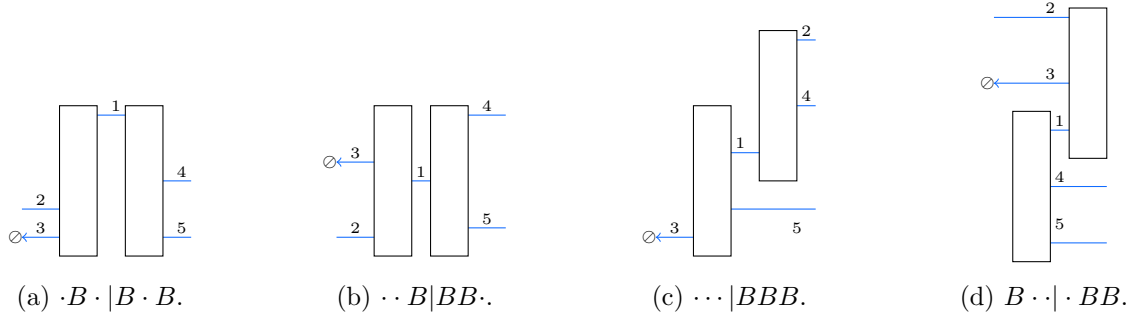

	\centering
	\begin{subfigure}[b]{0.24\linewidth}
		\ctikzfig{images/tikz_files/or_schemas/obo}
		\vspace*{-25mm}
		\caption{$\obo$.}\label{f:schema-obo}%
	\end{subfigure}
	\hfill
	\begin{subfigure}[b]{0.24\linewidth}
		\ctikzfig{images/tikz_files/or_schemas/oob}
		\vspace*{-25mm}
		\caption{$\oob$.}\label{f:schema-oob}%
	\end{subfigure}
	\hfill
	\begin{subfigure}[b]{0.24\linewidth}
		\ctikzfig{images/tikz_files/or_schemas/ooo}
		\vspace*{-15mm}
		\caption{$\ooo$.}\label{f:schema-ooo}%
	\end{subfigure}
	\hfill
	\begin{subfigure}[b]{0.24\linewidth}
		\ctikzfig{images/tikz_files/or_schemas/boo}
		\vspace*{-12mm}
		\caption{$\boo$.}\label{f:schema-boo}%
	\end{subfigure}
	\caption{Schemas showing the ``equivalence'' of the OR gadgets.}\label{f:or-schemas}
\end{figure}

The AND gadgets are trickier, but we can construct all of them using two gadgets: 
a gadget that satisfies the same constraints as an NCL AND vertex, and a ``red bend'' gadget.
The red bend gadget will allow us to, in some sense, reorient the direction of a tunnel that corresponds to a red edge (hence the ``red'' in the name even though all our blocks are blue; in the context of HANANO, the red bend gadget also works on ``blue edges'' but that does not necessarily hold in general).
Lemma~\ref{l:and-equivalence} gives an analogous result to Lemma~\ref{l:or-equivalence}. 

\begin{lemma}\label{l:and-correctness}
The gadget in Figure~\ref{f:and-gadget} satisfies the same constraints as an $\ncl$ $\andvertex$ vertex, with B1 representing the blue edge, and with B3 and B4 representing the red edges.
\end{lemma}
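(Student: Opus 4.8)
The plan is to mimic the structure of the proof of Lemma~\ref{l:or-correctness}: analyze the limited movement of the movable gray blocks, determine exactly which blooms each colored block can achieve, and then read off the constraint this imposes on which edges can point into the gadget simultaneously. Recall that an NCL AND vertex with blue edge $b$ and red edges $r_1, r_2$ has the constraint that $b$ may point outward only if \emph{both} $r_1$ and $r_2$ point inward; equivalently, the "output" B1 can leave the gadget (i.e., B1 can bloom, freeing it to travel) only when both B3 and B4 are present in the gadget. So the goal is to show: (i) B1 can bloom iff both B3 and B4 are inside the gadget in their respective positions; (ii) conversely, whenever B3 and B4 are both present, B1 can in fact be made to bloom (so the gadget is not overly restrictive); and (iii) the inflow constraint is never violated — there is always at least one blue block present representing an inward edge (when B1 is inside it counts as the blue edge pointing in; when B1 is out, then B3 and B4 must both be in, giving the two red edges).

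First I would catalog each movable gray block in Figure~\ref{f:and-gadget} and argue, exactly as in Lemma~\ref{l:or-correctness}, that each has only one unit of vertical freedom, so that each blue block B$x$ can only ever reach its own designated flower BF$x$, and only from a specific side (recalling all blocks bloom upward, so a block must be directly beneath the flower-bearing white line — or be pushed there — to bloom). Second, I would trace the "mechanism": the presence of B3 in its slot should raise (or lower) some gray block, and independently the presence of B4 should move another gray block, and \emph{only} when both of these displacements have occurred is there a clear vertical channel for B1 to be brought into contact with BF1 and bloom. The key asymmetry versus the OR gadget is that here two separate blocks must both be in place — a conjunction rather than a disjunction — so the gadget geometry must implement an "AND" of two enabling conditions rather than an "OR". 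Third, I would verify the converse direction: give an explicit short sequence of slides/swaps that, starting from B3 and B4 both present, repositions the gray blocks and slides B1 up to BF1, confirming the bloom is actually reachable. Fourth, I would check the irreversibility/inflow safeguard: if one attempts to remove B1, B3, and B4 all from the gadget, some gray block drops irrecoverably (as with G1, G2 in the OR case) — but since the constraint says B1 can only leave when B3 and B4 are both in, this configuration is never forced, and conversely B3 (or B4) can only leave when B1 is in, so the minimum inflow of two is always maintained.

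The main obstacle I expect is step (ii), the converse "liveness" direction combined with making sure the gadget is neither too permissive nor too restrictive: it is easy to build a gadget that blocks the bad configurations but it is fiddly to confirm that \emph{every} NCL-legal configuration of the AND vertex is realizable in the gadget and that blooming B1 does not accidentally get forced or forbidden by some chain-blooming side effect or by gravity pulling a gray block out from under B1 at the wrong moment. Concretely, I would need to argue carefully about the order of operations when B3 leaves while B4 is present (and B1 is in): the gray blocks must settle into a configuration still compatible with later re-inserting B3, so that "flipping the red edge back" remains possible, matching NCL's reversibility. This case analysis over the handful of legal local states of an AND vertex — $\{b\text{ in}\}$, $\{r_1\text{ in}, r_2\text{ in}\}$, $\{b\text{ in}, r_1\text{ in}\}$, $\{b\text{ in}, r_2\text{ in}\}$, $\{b,r_1,r_2\text{ all in}\}$ — and the transitions among them, is where the real work lies; the rest is the same "bounded gray-block movement $\Rightarrow$ bounded reachable blooms" argument used for the OR gadget.
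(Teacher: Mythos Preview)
Your outline rests on a misreading of the gadget's mechanism, and following it would not reproduce the paper's argument.

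First, you have the roles of the blocks inverted. In the actual gadget, B4 is an \emph{internal witness} block: the level is solvable only if B4 eventually blooms, and B4 is not one of the edge blocks at all. The three edge blocks are B1 (blue edge) and B2, B3 (red edges); these slide horizontally in and out of the gadget through tunnels and need not bloom in order to leave. (The lemma statement's ``B3 and B4'' is a slip; the proof makes clear the red edges are B2 and B3.) Your plan instead treats B1's bloom as the output event and B3, B4 as the enabling inputs, and you explicitly equate ``B1 can leave'' with ``B1 can bloom, freeing it to travel.'' That conflation is fatal: blocks exit by sliding, not by blooming, and the whole point of the design is to let B1 slide out \emph{without} blooming when it is carrying G1 to be stowed.

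Second, and consequently, you miss the actual enforcement mechanism. The inflow constraint is encoded by the support structure of G1, G2, G3: either B1 alone supports G1 (which in turn holds up G2 and G3), or G1 is stowed to the left and B2, B3 directly support G3, G2. If at any moment neither of these holds, G2 and G3 drop irrecoverably, G4 and G6 can never rise, and B4 can never reach BF4. So the proof is not ``B1 blooms iff both reds are in''; it is ``B4 blooms iff at every moment either B1 is in or both B2 and B3 are in,'' which is exactly the AND constraint. Your case analysis over legal AND states is the right shape, but it must be run against this support/stow mechanism, including the subtlety that BF1 sits on the movable block G7 and is swapped leftward so that B1 can traverse horizontally without prematurely blooming.
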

\begin{proof}
	Let us first describe the gadget before arguing its correctness.

	G4 and G6 can only move up by one ``unit,'' and 
	G2 and G3 can each either move up or move down
	by one unit. Thus each colored block, can only reach one flower.
	If both B2 and B3 exit the gadget, then B1 must remain to support G1 (which in 
	turn supports G2 and G3), as otherwise, G2 and G3 will drop by one unit 
	and G4 and G6 will never be able to move up.
	Similarly, if B1 is to exit, all the gray blocks must remain supported. 
	This is only possible if G1 is stowed to the left and B2 and B3
	remain in the gadget to, respectively, support G3 and G2. 
	Additionally, the area underneath B1 is made up of multiple movable gray blocks for a 
	simple reason. B1 must be able to move horizontally without blooming (either to exit
	the gadget or two carry G1 to ``stow'' it). By this setting, we can move
	the location of BF1 (by swapping G7 with an adjacent movable block of width one; BF1 is affixed to G7 and the two will move as if they were one $2 \times 1$ block) to make sure it is always on the left of B1.

	Now, the only way for B4 to bloom is if B4 is in contact with BF4, and it must be
	on BF4's right side (the only other exposed side of BF4 is the bottom side, but if B4 is 
	directly under BF4 it will not have enough room to bloom). 
	Thus B4 can bloom if and only if G4 and G6 move up by one unit. This can happen exactly if both
	G2 and G3 move up by one unit. There are two ways this can happen:
	Either both B2 and B3 bloom, or B1 blooms (thus pushing G1 up by one unit).
	Thus B4 blooms if and only if either B2 and B3 bloom in the gadget, or B1 blooms in the gadgets.
\end{proof}

We now define an important property of the red bend gadget that is used in the proof of Lemma~\ref{l:and-equivalence}.	Intuitively, the results means that the 
inflow constraint on red bend gadgets is one (or two under the ``blue bend'' interpretation).

\begin{proposition}\label{p:bend}
	The red bend gadget in Figure~\ref{f:bend} is solvable if and only if either B1 or B2 blooms while supporting G1.
\end{proposition}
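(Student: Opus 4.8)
The plan is to analyze the red bend gadget of Figure~\ref{f:bend} the same way Lemma~\ref{l:and-correctness} and Lemma~\ref{l:or-correctness} analyze their gadgets: first pin down the limited mobility of the movable gray blocks, then trace which blooms are physically possible, and finally argue the equivalence in both directions. Concretely, I would first observe that the internal gray block(s) --- in particular the block $G1$ that $B1$ and $B2$ are meant to support --- can move by at most one unit, so that the only flower each of $B1$, $B2$ can reach is its designated flower, and blooming that flower is the only way to push $G1$ (or whatever gray structure gates the target flower) up to the configuration from which the level can be completed. This mirrors the ``each colored block can only reach one flower'' step used in the earlier lemmas.

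For the ``only if'' direction I would argue the contrapositive: if the level is solved, then every colored block in the gadget has bloomed, and in particular the output/target block has bloomed; but by the mobility analysis, that target block can only bloom when the gray structure has been raised by one unit, and the raising can only be caused by $B1$ or $B2$ blooming while it is in the support position under $G1$ (if $B1$ or $B2$ were stowed elsewhere or had left, $G1$ would drop and stay dropped, permanently blocking the target bloom --- exactly the irreversibility argument from Lemma~\ref{l:or-correctness}). For the ``if'' direction I would exhibit the explicit move sequence: bring a block into the support slot, bloom it against its flower, which lifts $G1$, which in turn frees the remaining blocks/flowers, and carry out the chain of blooms that completes the level; since all blocks bloom upward and the freed space is enough, this sequence is legal. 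I would note that because the gadget is ``size-adaptable'' (internally paddable per Section~\ref{sub:structure}), none of this depends on the exact height.

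The main obstacle I expect is the bookkeeping around gravity and the swap/stow moves: I need to verify that after a block blooms in the support slot, the newly created flower does not itself trigger an unwanted chain bloom or a collapse, and that the gray block $G1$ stays supported throughout the solving sequence rather than falling between moves. In other words, the delicate point is showing the support configuration is reachable and maintainable, not merely that blooming $B1$ or $B2$ mechanically lifts $G1$ --- this is the same subtlety Lemma~\ref{l:and-correctness} handles via the row of movable gray blocks under $B1$ and the movable flower $BF1$, and I would invoke the analogous feature of the red bend gadget here. Once that is checked, the statement ``solvable iff $B1$ or $B2$ blooms while supporting $G1$'' follows by combining the two directions above.
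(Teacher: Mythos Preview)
Your approach is correct and matches the paper's: both directions go through the observation that the target block $B3$ can reach $BF3$ only if $G1$ (and hence $G2$) is lifted one unit, and that lift can be effected exactly by $B1$ or $B2$ blooming in the support slot. The paper is even terser than you: it simply remarks that the red bend gadget is a restricted/modified version of the $\orvertex$ gadget of Figure~\ref{f:or-gadget} and then reuses that lemma's analysis, so your anticipated difficulties about $\andvertex$-style stow moves and chain blooms do not arise here.
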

\begin{proof}
	The design of the gadget is simply a restricted/modified version of that in Figure~\ref{f:or-gadget},
	so we omit the description of the gadget. 

	$\implies$: Suppose the gadget is solvable. Then B3 must come in contact with BF3. This is only possible if both G1 and G2 move up by one unit. For this to happen, either B1 or B2 must bloom while supporting G1.

	$\impliedby$: Suppose that either B1 or B2 blooms while supporting G1. In both cases, G1 moves up by one
	unit, thus also pushing G2 up by one unit, hence allowing B3 to come in contact with BF3 to bloom.
\end{proof}

Earlier, we mentioned that when our attention is focus on HANANO, we only need two gadgets. This is indeed possible because all the blocks that we use are of the same color, and so we can define the red bend gadget to be a restricted version of the OR gadget. Consider the gadget in Figure~\ref{f:or-gadget}. If we place a constrained blue edge terminator at the tunnel for B3, then the resulting gadget is essentially the red bend gadget.

\begin{lemma}\label{l:and-equivalence}
	For each gadget $G$ in the following list, the remaining gadgets in that same list can be constructed
	from $G$, the red bend gadget, and any $\orvertex$ gadget:
	\begin{enumerate*}[itemjoin={{, }}, itemjoin*={{, and }}, after={{.}}]
		\item\label{and:one} $\rooorb$
		\item\label{and:two} $\ororob$
		\item\label{and:three} $\rrb$
		\item\label{and:four} $\oobrro$
		\item\label{and:five} $\brr$
		\item\label{and:six} $\oorbro$
		\item\label{and:seven} $\booorr$
		\item\label{and:eight} $\orobor$
		\item\label{and:nine} $\rbr$
		\item\label{and:ten} $\rooobr$
		\item\label{and:eleven} $\oboror$
		\item\label{and:twelve} $\oorrbo$
	\end{enumerate*}
\end{lemma}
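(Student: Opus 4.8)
The plan is to mirror the proof of Lemma~\ref{l:or-equivalence}: rather than give twelve separate constructions, I would arrange the twelve gadgets in a cycle and show how to build gadget~(\ref{and:two}) from gadget~(\ref{and:one}), gadget~(\ref{and:three}) from gadget~(\ref{and:two}), and so on, finally building gadget~(\ref{and:one}) from gadget~(\ref{and:twelve}). Since ``can be constructed from'' is transitive (a schema built out of a gadget that is itself realized by a schema is again a schema, and red bends and OR gadgets remain available throughout), closing the cycle shows that every gadget on the list constructs every other one. As in the OR case, I would first record that by vertical symmetry it suffices to treat exactly these twelve configurations; the other twelve AND configurations are obtained for free by reflecting the schemas, and I will reuse the same abuse of notation (a shorthand may denote the vertically reflected gadget inside a schema).

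Each step of the cycle is a single schema, and the two available tools are: (i) the red bend gadget, which by Proposition~\ref{p:bend} is a degree-two NCL fragment with inflow requirement one, i.e.\ a ``red wire'' that relocates the endpoint of a red tunnel to a different side or height; and (ii) an arbitrary OR gadget, which via Proposition~\ref{p:constrained_blue} also supplies a constrained blue edge terminator. Pinning one of the three (blue) edges of an OR vertex outward with a constrained blue edge terminator leaves its other two blue edges forming a ``blue wire'' — the inflow requirement of two forces at least one of them inward — so an OR gadget lets me relocate the endpoint of a blue tunnel. Thus, starting from a source AND gadget whose blue tunnel is at one slot and whose two red tunnels are at two other slots, I route each red tunnel through a red bend and the blue tunnel through a short chain of OR-derived blue wires, bringing the three external endpoints to the slots and sides demanded by the target gadget; the composite's three free edges are then one blue and two red in the prescribed configuration.

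For correctness of each schema I would give the same one-paragraph inflow argument as in Lemma~\ref{l:or-equivalence}: label the internal edges, assume all three external edges point out of the composite, and trace that this forces some internal vertex (the source AND vertex, a red bend, or an OR vertex) below its inflow requirement, a contradiction; conversely, any joint orientation of the three external edges that an $\ncl$ $\andvertex$ vertex permits is realized by propagating orientations through the wires. Concretely, if the external blue edge points out, the blue wire forces the source gadget's blue tunnel outward, which by Lemma~\ref{l:and-correctness} forces both of its red tunnels inward, and the two red bends then force both external red edges inward — precisely the AND constraint. I would also note that the internal padding of the constituent gadgets carries over, so the composite stays size-adaptable, and that every internal vertex remains degree three with the AND/OR coloring, so the composite is a legitimate gadget.

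The step I expect to be the main obstacle is handling the blue tunnel: red bends relocate red tunnels cheaply, but an AND vertex's blue edge is the special one and there is no primitive ``blue bend'' on the list. Extracting a usable blue wire/bend from an OR gadget plus constrained terminators, and in particular routing it so that the blue endpoint can be placed at \emph{any} of the three slots and on \emph{either} side without colliding with the two red-bend routings or breaking planarity and the visibility-representation layout, is the delicate part; the remaining work — actually drawing all twelve schemas and checking each relocation is planar and consistent — I expect to be routine bookkeeping once the blue-wire construction is fixed, just as the four OR schemas were.
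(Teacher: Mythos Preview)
Your plan is correct and matches the paper's strategy closely: the paper also arranges the twelve gadgets in a single cycle $(\ref{and:one})\to(\ref{and:two})\to\cdots\to(\ref{and:twelve})\to(\ref{and:one})$, gives one small schema per step, and verifies each by the same style of inflow-tracing case analysis you describe (``if the blue external edge points out, chase constraints inward to force both reds in; otherwise the reds are free''). The use of vertical mirroring and transitivity is identical.

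The one substantive difference is in how the blue tunnel gets relocated. You single this out as the main obstacle and propose building a \emph{blue wire} by pinning one port of an OR gadget with a constrained blue terminator; this is correct and is essentially what the paper's remark about the ``blue bend interpretation'' in Proposition~\ref{p:bend} amounts to. The paper's actual twelve schemas, however, are more frugal: each attaches just one or two red bends to (possibly the mirror of) the source gadget, giving schemas with only four or five edges total, and no separate blue-wire primitive is introduced in the argument. So what you flag as ``the delicate part'' turns out, in the paper's treatment, to dissolve into the same per-step bookkeeping as the red rerouting; your more uniform wire-based scheme would work too but is heavier than necessary. One small slip: you write that ``every internal vertex remains degree three with the AND/OR coloring,'' but the red bend is a degree-two vertex with inflow requirement one, so the composite is not itself a planar AND/OR constraint graph---it is still a legitimate gadget, just not for that reason.
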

\begin{proof}%
	We represent our red bend gadgets using a vertex with exactly two
	red edges on the same side of the gadget. The structure of this proof resembles that of Lemma~\ref{l:or-equivalence}.

	From~\ref{and:one} to~\ref{and:two}. Figure~\ref{f:schema-down-oro}
	depicts the construction. If edge~5 points right, then edge~1 points 
	right and edge~4 points left. Thus edge two must point left, leaving edge~3
	to point right. If edge~5 points left, then edge~4 is free to point in either
	direction. In that case, we can fix edge~1 to point left and edge~2 to point
	right, thus leaving edge~3 to point in any direction.
	Therefore, this gadget satisfies the same constraints as an NCL AND vertex.

	In the rest of this proof, whenever we use a number $x$, we implicitly mean
	``edge~$x$.''

	From~\ref{and:two} to~\ref{and:three}. Figure~\ref{f:schema-down-ooo} depicts
	the construction. If~4 points right, then~2 must point right and~3 must point left.
	Thus~1 must point left.
	If~4 points left, we can fix~2 to point left, leaving~1 and~3 free to 
	point in either direction. Therefore, this gadget satisfies the same constraints as an NCL AND vertex.

	From~\ref{and:three} to~\ref{and:four}. Figure~\ref{f:schema-down-oob} depicts
	the construction. 
	If~4 points left, then~3 must point right, forcing both~1 and~2 to point left.
	If~4 points right, we can fix~3 to point left, thus leaving~1 and~2 free to
	point in either direction.
	Therefore, this gadget satisfies the same constraints as an NCL AND vertex.

	From~\ref{and:four} to~\ref{and:five}. Figure~\ref{f:schema-top-ooo} depicts
	the construction. 
	If~4 points right, then~3 must point left, forcing both~1 and~2 to point left.
	If~4 points left, we can fix~3 to point right, thus leaving~1 and~2 free to
	point in either direction.
	Therefore, this gadget satisfies the same constraints as an NCL AND vertex.

	From~\ref{and:five} to~\ref{and:six}. Figure~\ref{f:schema-top-oor} depicts
	the construction. 
	If~1 points right, then~2 and~3 must point left, and so~4 must point left.
	If~1 points left, then we can fix~3 to point right, leaving~2 and~4 free
	to point in either direction.
	Therefore, this gadget satisfies the same constraints as an NCL AND vertex.

	From~\ref{and:six} to~\ref{and:seven}. Figure~\ref{f:schema-top-boo} depicts
	the construction. 
	If~1 points right, then~2 must point left and~3 must point right, 
	and so~4 must point right.
	If~1 points left, then we can fix~2 to point right, leaving~3 and~4 free
	to point in either direction.
	Therefore, this gadget satisfies the same constraints as an NCL AND vertex.

	From~\ref{and:seven} to~\ref{and:eight}. Figure~\ref{f:schema-top-oro} depicts
	the construction. 
	If~1 points right, then~2 and~3 must point right too, and so~4 must point left.
	If~1 points left, then we can fix~3 to point left, 	
	leaving~2 and~4 free
	to point in either direction.
	Therefore, this gadget satisfies the same constraints as an NCL AND vertex.

	From~\ref{and:eight} to~\ref{and:nine}. Figure~\ref{f:schema-mid-ooo} depicts
	the construction. 
	If~1 points right, then~2 must point right and~3 must point left, 
	and so~4 must point left.
	If~1 points left, then we can fix~2 to point left, 	
	leaving~3 and~4 free
	to point in either direction.
	Therefore, this gadget satisfies the same constraints as an NCL AND vertex.

	From~\ref{and:nine} to~\ref{and:ten}. Figure~\ref{f:schema-mid-roo} depicts
	the construction. 
	If~1 points right, then~2 and~3 must point left, 
	and so~4 must point right.
	If~1 points left, then we can fix~3 to point right, 	
	leaving~2 and~4 free
	to point in either direction.
	Therefore, this gadget satisfies the same constraints as an NCL AND vertex.

	From~\ref{and:ten} to~\ref{and:eleven}. Figure~\ref{f:schema-mid-obo} depicts
	the construction. 
	If~1 points right, then~2 must point left and~3 must point right, 
	and so~4 must point right.
	If~1 points left, then we can fix~2 to point right, 	
	leaving~3 and~4 free
	to point in either direction.
	Therefore, this gadget satisfies the same constraints as an NCL AND vertex.

	From~\ref{and:eleven} to~\ref{and:twelve}. Figure~\ref{f:schema-mid-oor} depicts
	the construction. 
	If~1 points right, then~2 and~3 must point right, 
	and so~4 must point left.
	If~1 points left, then we can fix~3 to point left, 	
	leaving~2 and~4 free
	to point in either direction.
	Therefore, this gadget satisfies the same constraints as an NCL AND vertex.

	From~\ref{and:twelve} to~\ref{and:one}. Figure~\ref{f:schema-down-roo} depicts
	the construction. 
	If~1 points right, then~2 must point right and~3 must point left, 
	and so~4 must point left, and~5 must point right.
	If~1 points left, then we can fix~2 to point left and fix~4 to point right, 	
	leaving~3 and~5 free
	to point in either direction.
	Therefore, this gadget satisfies the same constraints as an NCL AND vertex.
\end{proof}

\newcommand{\andschemaswidth}{0.3\linewidth}

\begin{figure}[h!]
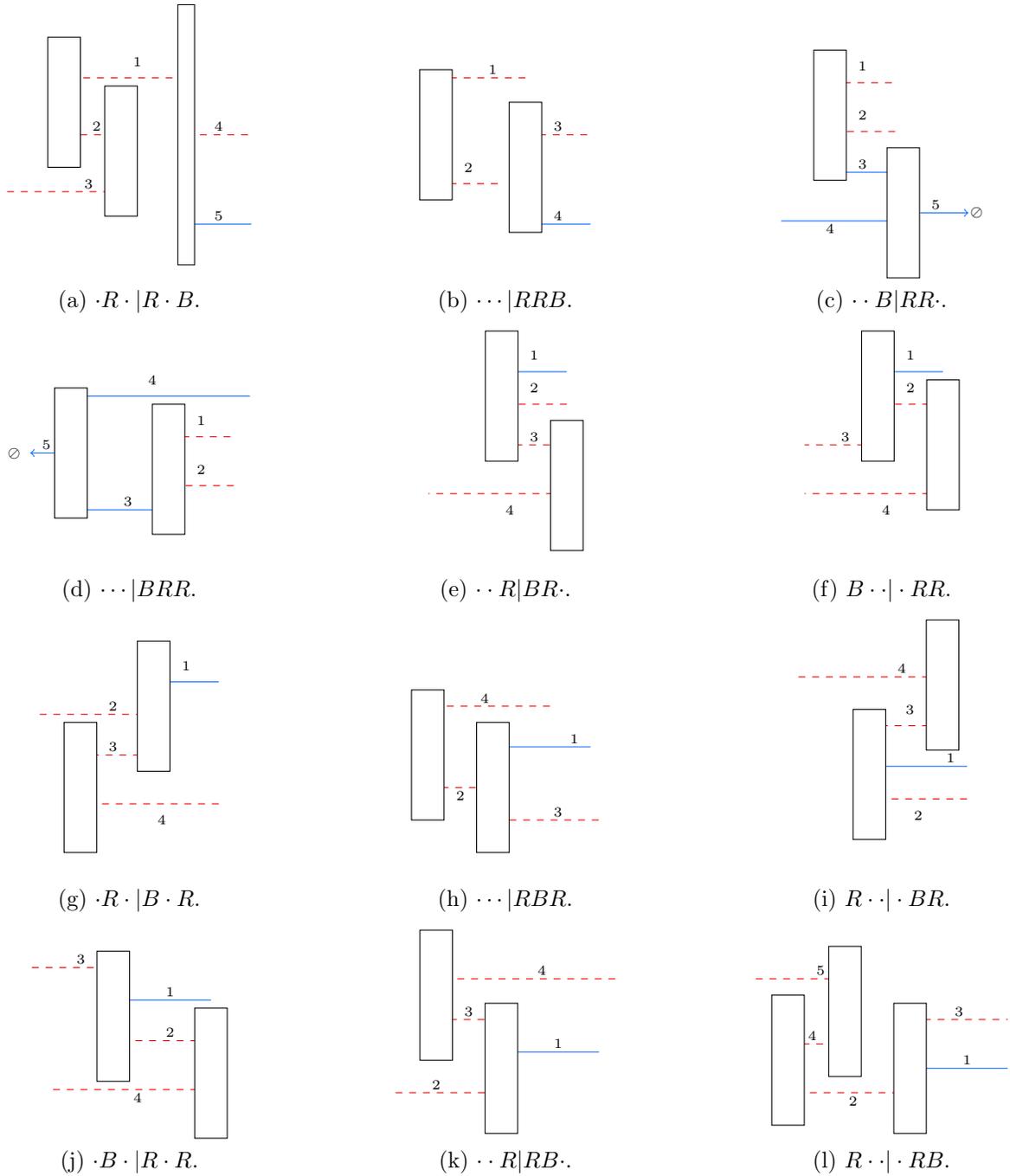

	\centering
	\begin{subfigure}[b]{\andschemaswidth}
		\ctikzfig{images/tikz_files/and_schemas/down-oro}
		\vspace*{-15mm}
		\caption{$\ororob$.}\label{f:schema-down-oro}%
	\end{subfigure}
	\hfill
	\begin{subfigure}[b]{\andschemaswidth}
		\ctikzfig{images/tikz_files/and_schemas/down-ooo}
		\vspace*{-15mm}
		\caption{$\rrb$.}\label{f:schema-down-ooo}%
	\end{subfigure}
	\hfill
	\begin{subfigure}[b]{\andschemaswidth}
		\ctikzfig{images/tikz_files/and_schemas/down-oob}
		\vspace*{-7mm}
		\caption{$\oobrro$.}\label{f:schema-down-oob}%
	\end{subfigure}
	\begin{subfigure}[b]{\andschemaswidth}
		\ctikzfig{images/tikz_files/and_schemas/top-ooo}
		\vspace*{-17mm}
		\caption{$\brr$.}\label{f:schema-top-ooo}%
	\end{subfigure}
	\hfill
	\begin{subfigure}[b]{\andschemaswidth}
		\vspace{2mm}
		\ctikzfig{images/tikz_files/and_schemas/top-oor}
		\vspace*{-7mm}
		\caption{$\oorbro$.}\label{f:schema-top-oor}%
	\end{subfigure}
	\hfill
	\begin{subfigure}[b]{\andschemaswidth}
		\ctikzfig{images/tikz_files/and_schemas/top-boo}
		\vspace*{-7mm}
		\caption{$\booorr$.}\label{f:schema-top-boo}%
	\end{subfigure}
	\begin{subfigure}[b]{\andschemaswidth}
		\ctikzfig{images/tikz_files/and_schemas/top-oro}
		\vspace*{-7mm}
		\caption{$\orobor$.}\label{f:schema-top-oro}%
	\end{subfigure}
	\hfill
	\begin{subfigure}[b]{\andschemaswidth}
		\ctikzfig{images/tikz_files/and_schemas/mid-ooo}
		\vspace*{-17mm}
		\caption{$\rbr$.}\label{f:schema-mid-ooo}%
	\end{subfigure}
	\hfill
	\begin{subfigure}[b]{\andschemaswidth}
		\vspace*{2mm}
		\ctikzfig{images/tikz_files/and_schemas/mid-roo}
		\vspace*{-20mm}
		\caption{$\rooobr$.}\label{f:schema-mid-roo}%
	\end{subfigure}
	\hfill
	\begin{subfigure}[b]{\andschemaswidth}
		\ctikzfig{images/tikz_files/and_schemas/mid-obo}
		\vspace*{-17mm}
		\caption{$\oboror$.}\label{f:schema-mid-obo}%
	\end{subfigure}
	\hfill
	\begin{subfigure}[b]{\andschemaswidth}
		\vspace*{2mm}
		\ctikzfig{images/tikz_files/and_schemas/mid-oor}
		\vspace*{-25mm}
		\caption{$\oorrbo$.}\label{f:schema-mid-oor}%
	\end{subfigure}
	\hfill
	\begin{subfigure}[b]{\andschemaswidth}
		\ctikzfig{images/tikz_files/and_schemas/down-roo}
		\vspace*{-25mm}
		\caption{$\rooorb$.}\label{f:schema-down-roo}%
	\end{subfigure}
	\caption{Schemas showing the ``equivalence'' of the AND gadgets.}\label{f:and-schemas}
\end{figure}

\subsection{Main Result}

\begin{theorem}\label{t:pspace-complete}
	$\hanano$ is $\pspace$-complete even if
	(1)~all flowers and colored blocks have the same color, and (2)~colored blocks can only bloom
	upwards.
\end{theorem}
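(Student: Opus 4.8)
The plan is a polynomial-time many-one reduction from the configuration-to-edge problem for planar $\andvertex$/$\orvertex$ $\ncl$ graphs, which is $\pspace$-complete; membership of $\hanano$ in $\pspace$ is already Theorem~\ref{t:upper-bound}, and the two extra restrictions in the statement will hold automatically, since every colored block we introduce is blue and every gadget blooms upward, and the gadget-derivation results only chain gadgets and attach terminators, so they preserve both properties. Given an $\ncl$ instance $(G,e)$, the first step is to compute a visibility representation $\Gamma$ of $G$ in linear time (the linear-time visibility-representation theorem), which draws each vertex as a vertical segment and each edge as a horizontal segment meeting its two endpoint segments.

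Next I would turn $\Gamma$ into a Hanano level. Each vertex $v$ is an $\andvertex$ or $\orvertex$ vertex, and $\Gamma$ specifies, for each of $v$'s three incident edges, whether it attaches on the left or the right of $\Gamma(v)$ and at what relative height; this is exactly one of the $32$ gadget types $x_1x_2x_3|y_1y_2y_3$. All eight $\orvertex$ types are available by Lemmas~\ref{l:or-correctness} and~\ref{l:or-equivalence} together with Proposition~\ref{p:constrained_blue}, and all $\andvertex$ types by Lemmas~\ref{l:and-correctness} and~\ref{l:and-equivalence}, using the red bend gadget (one of the three base gadgets, whose inflow-one behavior is Proposition~\ref{p:bend}; within the Hanano Puzzle it need not even be built separately, being a restricted $\booobb$ gadget) and taking mirror images of the listed gadgets where needed. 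Each gadget is size-adaptable, so I pad it vertically to the length of $\Gamma(v)$ and align its entry points with the incident edge segments; each edge of $G$ becomes a horizontal tunnel joining the two gadgets at the matching entry points, carrying a single blue block placed in the endpoint gadget into which the edge points in the initial $\ncl$ configuration. By the confinement property of the gadgets, that block can only ever slide along its own tunnel. Finally, to encode the target, I add a goal flower (and designate a goal block) positioned so that it can bloom only from the configuration representing ``$e$ flipped,'' and so that $e$'s block cannot bloom while $e$ points in its initial direction. The whole level has polynomial size and is computable in polynomial time.

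For correctness, the ($\Leftarrow$) direction starts from an $\ncl$ flip sequence that ends by flipping $e$ and replays it in the level: each flip is the slide of the corresponding block between its two gadgets, which is fully reversible because the tunnel is horizontal so no block is ever trapped by gravity—reversibility matters here since an $\ncl$ witness may flip an edge and later flip it back. This reaches a configuration in which $e$ is flipped while every gadget still meets the minimum inflow constraint; from there I complete the level by blooming every colored block in a suitable order, possibly interleaving further reversible slides, using the per-gadget analysis of the lemmas (a gadget whose residents satisfy the inflow constraint can have all of its colored blocks bloomed), after which the goal block blooms. The ($\Rightarrow$) direction takes any solving play: by the gadget lemmas every between-gadget move preserves the minimum inflow constraint and every block stays in its own tunnel, and a bloom of an edge-block merely freezes that edge (which $\ncl$ permits), so the block positions over time project onto a legal $\ncl$ flip sequence; since the goal block blooms, some prefix of that sequence witnesses a flip of $e$, so $(G,e)$ is a yes-instance.

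I expect the main obstacle to be the interaction of the reduction with gravity-induced irreversibility. The visibility-representation device of Section~\ref{sub:structure} removes it from the edge-flip simulation and the gadget lemmas confine the remaining irreversibility inside individual gadgets, but the ($\Leftarrow$) direction still requires a careful global bloom schedule: from the target $\ncl$ configuration one must bloom every colored block—including the edge-blocks scattered among the gadgets—without an early irreversible bloom ever obstructing a later one, and this bookkeeping rests on the fine structure of the $\orvertex$, $\andvertex$, and red bend gadgets. One must also check that in the ($\Rightarrow$) direction an adversarial sequence of premature in-gadget blooms can never solve an otherwise-unsolvable instance, which holds because freezing edges only shrinks the set of reachable $\ncl$ configurations.
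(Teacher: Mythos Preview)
Your proposal is correct and follows essentially the same route as the paper: reduce from planar $\andvertex$/$\orvertex$ $\ncl$ by computing a visibility representation, instantiating the appropriate gadget at each vertex segment via Lemmas~\ref{l:or-correctness}--\ref{l:and-equivalence}, connecting gadgets with horizontal tunnels carrying one blue block per edge, and encoding the target edge so that its block can only bloom after the flip. The only minor divergence is in that last step: rather than \emph{adding} a goal flower, the paper \emph{removes} the flower that would have bloomed $e$'s block in the gadget for $v$ (replacing it by an immovable gray block, or extending the movable gray block it is affixed to), thereby forcing that block to travel to $u$'s gadget; your version is vaguer on this point but captures the same idea, and the paper's proof is in turn much terser than yours on the global bloom-ordering and adversarial-bloom issues you flag.
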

\begin{proof}
	Since $\hanano \in\pspace$, it suffices to show that $\ncl\manyone\hanano$.
	Consider the clearly-polynomial-time-computable function that we describe in the next paragraph. 
	We assume without loss of generality that the input
	is an NCL graph and a valid target edge, as we can easily detect in polynomial time if is not and map to a fixed
	element that is not in HANANO\@.
	
	Construct in polynomial time a visibility representation for the input NCL graph
	and construct a game grid based on the visibility representation, replacing each vertex of 
	the graph by a suitable gadget, and replacing
	edges with the appropriate tunnels.
	The game grid will be 
	polynomially larger than the visibility representation since the gadgets have 
	constant-bounded size. 
	We must ensure that the game is only solvable when the target edge 
	$e = (u, v)$ is flipped. 
	If the flower than blooms $b$ is attached to an immovable gray block, replace
	that flower with an immovable gray block.
	Otherwise, the flower that blooms $b$ must be attached to the top of a $1\times1$ 
	movable gray block. Replace that gray block with a $2\times 1$ movable gray block.
	There is now no flower in the gadget for $v$ that can bloom $b$, so to bloom, $b$ must move the gadget for $u$.
	If the game is solvable, then $b$ must bloom, and so there will exist a sequence 
	of block movements corresponding to edge flips, so the edge $e$ can be flipped in $G$. 
	If there is a sequence of edge flips that eventually flips edge $e$ in $G$, there is sequence of block movements that respect the inflow constraints and eventually see $b$ move from the gadget representing $v$ to the gadget representing $u$. Thus the colored block $b$ (and all the other ones in the game) can bloom and the game is solvable. 
	Finally note that all the colored blocks in our gadgets have their arrows
	pointing up, and that we only use blue blocks/flowers.
\end{proof}

\section{Related Work}\label{s:related-work}
The literature on the complexity of games is 
rich and covers a variety 
of games.
For general earlier results, we refer readers to Appendix~A of Hearn and 
Demaine's book~\cite{dem-hea:b:games-puzzles-comp}, which contains an 
extensive survey of games whose complexity were known at their time of 
writing. 

The introduction of NCL in an early paper by Hearn and Demaine~\cite{dem-hea:j:ncl-pspace}
helped simplify the process of showing that many games with sliding blocks 
are PSPACE-complete by limiting the number of gadgets to simulate to two.
The work on motion planning through doors~\cite{ani-bos-dem-dio-hen-lyn:c:door-pspace-hard}
provides a framework to show the PSPACE-hardness of certain problems by 
simulating \emph{one} gadget. 
However, that paper's contribution does not solve the major problem faced by 
classifying the Hanano Puzzle: circumventing certain effects of gravity.
There are games with gravity that were studied prior to the 
introduction of NCL\@.
For example, Friedman~\cite{fri:a:cubic} proved Cubic to be 
NP-hard using a similar construction to that of Liu and Yang~\cite{liu-yan:j:hanano}.\footnote{
	The paper actually claims to show the NP-completeness of Cubic. 
	However, there is no apparent proof (or even claim) in the text of a 
	matching upper	bound.}
Clickomania is another game with gravity that was studied before the 
introduction of NCL\@. This game is a one-player game with
a bounded number of moves and it is in fact NP-complete~\cite{bie-dem-dem-fle-jac-mun:b:clickomania}.
Solving a level of Super Mario Brothers (SMB), which is another game with gravity, has also been proven to be 
PSPACE-complete~\cite{dem-vig-wil:c:mario-bros}. 
However, the framework used in that proof does not
rely on NCL, since SMB is not a game that involves pulling blocks.
Another famous game with gravity is Tetris. While the ``offline'' version
is NP-complete~\cite{bre-dem-hoh-hoo-kos-lib:j:tetris-hard}, 
in the general case, it is NP-hard~\cite{asi-cou-dem-dem-hes-lyn-sin:j:tetris-np-hard}.
On the other hand, Jelly-no-Puzzle, also by Qrostar, is known to be NP-hard is the general case~\cite{yan:j:jelly}.
Our paper uses NCL to study a game with sliding blocks and irreversible gravity, and
extends this line of work by providing a framework to study such games using only three gadgets in general, and by having only two gadgets when focusing on HANANO.

\section{Conclusion and Open Directions}\label{s:conclusion}
After establishing the NP-hardness of HANANO, Liu and Yang
stated as an open problem the task of determining whether $\hanano \in \np$. 
It follows from our PSPACE-completeness result that 
$\hanano \not\in \np$ unless $\np = \pspace$.

Another contribution of this paper is the use of the visibility representations.
Our Proposition~\ref{p:constrained_blue} and our 
Lemmas~\ref{l:or-equivalence} and~\ref{l:and-equivalence}
helped significantly reduce the number of gadgets needed.
Since these proofs are independent of 
HANANO, we believe that they can be reused to analyze additional games.

By leveraging schemas and symmetry, we only needed to provide three gadgets (instead of 
32 gadgets).
If we focus our attention to HANANO, then we can derive the red bend gadget from $\booobb$. And so, an interesting direction would be to investigate whether the reduction
could be carried out using only two gadgets (or even one) in the general case. (One might posit that by placing a constrained blue edge terminator on the blue edge for $\brr$, we get what looks like a red bend gadget, but that omits the inflow constraint of two. Thus both red edges would need to face into the gadget, which is too strong of a requirement.)
Finally, we mention that
our movable gray blocks have up to six sides in two gadgets and up to eight sides in the third gadget. 
We followed the closest paper in the literature,
i.e., that of Liu and Yang~\cite{liu-yan:j:hanano}, which states that 
those movable gray blocks can have ``any size or shape''~\cite{liu-yan:j:hanano}. However,
it would be interesting if our result could be strengthened to only have movable gray blocks
with exactly four sides.

\section*{Acknowledgments}  
We thank
Benjamin Carleton,
Lane A. Hemaspaandra,
David E. Narv\'{a}ez,
Conor Taliancich, and 
Henry B. Welles, and the anonymous reviewers
for their helpful comments and suggestions.

\bibliographystyle{alpha}
\bibliography{exported}

\clearpage

\section*{Appendix}
\appendix

\section{Example of a (Simplified) Constrained Blue Terminator Gadget}

It is easy to see that in Figure~\ref{f:constrained_blue} is equivalent to what we call a constrained
blue edge terminator gadget. Indeed, we can view B1 has indicating that an edge is pointing into this gadget,
and if B1 were to leave the gadget, then G1 would fall, making it impossible for B2 to bloom.

\begin{figure}[h!]
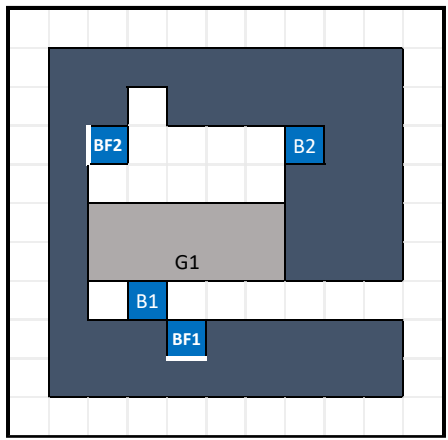

	\centering
	\myfbox{0.35}{constrained_blue}
	\caption{A simplified constrained blue edge terminator gadget.}\label{f:constrained_blue}
\end{figure}

\section{Sketch of a Reduction}\label{a:worked-example}
Let us consider as a working example, the graph in Figure~\ref{f:ncl-example}.
We already have a visibility representation from Figure~\ref{f:visibility-rep-example}.
Let the target edge to be flipped be $(C, B)$, i.e., edge~4.
Figure~\ref{f:replaced-pictorial} shows a sketch of the result.

\begin{figure}[ht!]
\centering
\includegraphics{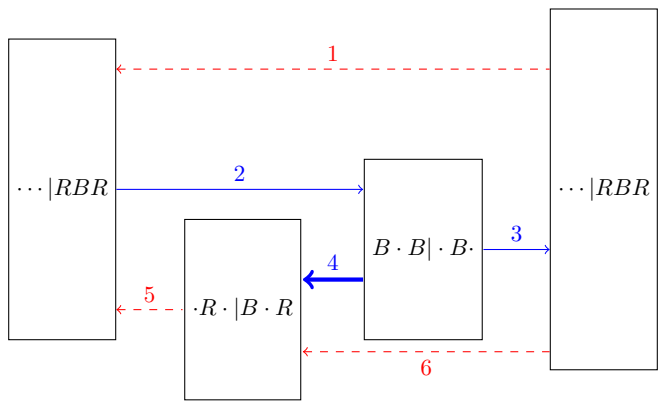}
\caption{Sketch of how the gadgets map onto the planar 
grid.}\label{f:replaced-pictorial}
\end{figure}

The horizontal lines now represent tunnels, the $\orobor$ gadget has one less
blue flower.
Empty spaces can be filled with immovable gray blocks.
The target edge has been boldfaced.

\end{document}